\numberwithin{equation}{section}
\pgfplotsset{width=0.6\textwidth,compat=1.13}
\newcommand{\bX}{\utwi{X}}
\newcommand{\by}{\utwi{y}}
\newcommand{\ba}{\utwi{a}}
\newcommand{\bb}{\utwi{b}}
\newcommand{\utwi}[1]{\pmb{#1}} 
\newcommand{\balpha}{\utwi{\alpha}}
\newcommand{\bbeta}{\utwi{\beta}}
\newtheorem{Theorem}{Theorem}[section]
\newtheorem{Lemma}{Lemma}[section]
\newenvironment{proof}{{\noindent\it Proof.}}{\hfill $\square$\par}
\title{On Gibbs Sampling for Structured Bayesian Models \\
\Large Discussion of paper by Zanella and Roberts}
\author[$\dagger$]{Xiaodong Yang}
\affil[$\dagger$]{School of Gifted Young, University of Science and Technology of China}
\author[$\ddagger$]{Jun S. Liu \thanks{This paper was done while Xiaodong Yang was working as a remote summer undergraduate at Professor Jun S. Liu's lab at Harvard, in the summer of 2021.}}
\affil[$\ddagger$]{Department of Statistics, Harvard University}
\date{November 2021}
\begin{document}
\maketitle

\section{Introduction}
We congratulate Professors Giacomo Zanella and Gareth Roberts for their path-breaking work in analyzing Gibbs sampling algorithms for a class of highly practical Bayesian hierarchical models. Together with their previous work, \cite{papaspiliopoulos2003non} and \cite{papaspiliopoulos2020scalable}, their multigrid decomposition strategy elegantly reduces a high-dimensional Gibbs sampling algorithm  to independent low-dimensional components so that the convergence rate of the Gibbs sampler can be determined analytically. These are extremely interesting and encouraging results. Throughout of the article, we will  refer to this work of \cite{zanella2021multilevel} as ``Z\&R'' for simplicity. 

The \textit{multigrid decomposition}  serves a central role in the whole theory established in the aforementioned series of papers. An intuition behind this decomposition is that lower-level mean statistics are sufficient for posterior inference on upper-level parameters, with lower-level parameters practically marginalized out. For example, \cite{papaspiliopoulos2003non} show that, for model \eqref{simplest} below, the posterior distribution of $(\mu,\bar{a})$ is independent of that of $(a_1-\bar{a},\cdots,a_I-\bar{a})$.

At the first glance, we cannot help notice that the intuition behind Z\&R's multigrid decomposition is quite different from that of either the classical deterministic multigrid methods \citep{mccormick1987multigrid}  or  \textit{multigrid Monte Carlo} methods \citep{goodman1989multigrid,liu2000generalised}. These latter multigrid strategies, as originally motivated by the design of efficient numerical  partial differential equation (PDE) solvers, are typically constructed artificially to accelerate the convergence of the algorithms by iterating between finer-grid and coarser-grid updates. 
In contrast,  Z\&R's multigrid decomposition  
is a decomposition of the given parameter space implied by the  algorithm  itself (under a specific parametrization).
Furthermore, Z\&R show that Gibbs sampling for the upper level of their multigrid decomposition converges slower than that for the lower level (Theorem 11), whereas in classical multigrid methods the upper levels are so constructed that their associated MCMC samplers converge faster  than those of the lower levels \citep{goodman1989multigrid,liu2000generalised}.

Despite these fundamental differences between the multigrid decomposition and  multigrid Monte Carlo, we are very much inspired by Z\&R's insightful formulation and will discuss some potential extensions of their work in the rest of the article. To illustrate our main ideas, we start by focusing on the simplest model:
\begin{equation}\label{simplest}
    y_{ij} = \mu + a_i + \epsilon_{ij}, \quad i\in[1:I],  j\in [1:J],
\end{equation}
which can be seen as either a two-level hierarchical model or a one-factor crossed-effects model. 
In the rest of the article, we use notation $\Vec{a}$ to represent a vector. For example,
$\Vec{a}_i$ used in Section 2 is an $\ell$-dimensional vector. 
Boldface letters are used  to represent collections of effects. For example, we write $\ba =(a_1,\cdots,a_I)$, and $\bar{a}$ for its mean.
We also denote $\mathbf{1}_k=(1,\cdots,1)^\top \in\mathbb{R}^{k\times1}$ and $\mathbb{I}_k$ for $k\times k$ identity matrix. For a matrix $M$, $\Vert M\Vert_2=\sqrt{\sigma_{\max}(M^\top  M)}$ denotes its spectral norm.

\section{Vector Hierarchical Models}\label{sec:vector-HM}
Our main goal here is to extend the framework of \eqref{simplest} to consider the vector-version of the model, as shown in \eqref{non-centering}. This type of models is not uncommon in practice and is a prototype of more complex realistic models. For example, the observed vector  $\Vec{y}_{ij}$ may represent several types of medical measurements (e.g., blood pressure, cholesterol level, weight, height, etc) of  individual $j$ in group $i$, and these measurements are certainly correlated within each individual. After presenting results for (\ref{non-centering}), we will comment on its potential extensions.

\subsection{Non-centering model and convergence rate}
Let us begin with an extension of model \eqref{simplest} by replacing the scalars with vectors to arrive at the following model.
\vspace{2mm}

\noindent
{\bf Model S2m} (Symmetric two-level model with \textbf{non-centering} parametrization). \textit{Suppose
\begin{equation}\label{non-centering}
    \Vec{y}_{ij}=\Vec{\mu}+\Vec{a}_i+\Vec{\epsilon}_{ij}, \ \ i\in[1:I],  j\in [1:J],
\end{equation}
where $\Vec{y}_{ij},\Vec{\mu},\Vec{a}_i,\Vec{\epsilon}_{ij}\in\mathbb{R}^\ell$, and $\Vec{\epsilon}_{ij}\stackrel{i.i.d.}{\sim} \mathcal{N}(0,\Sigma_e)$ (i.e., i.i.d. multivariate Gaussian). We impose a flat prior on $\Vec{\mu}$ and another multivariate Gaussian $\mathcal{N}(0,\Sigma_a)$ on each $\Vec{a}$. Here $\Sigma_e$ and $\Sigma_a$ are two positive definite $\ell\times \ell$ matrices.} 

\vspace{2mm}

For this model, we can write down the joint posterior distribution as
\begin{equation}
    p(\Vec{\mu},\Vec{\ba}\mid\Vec{y})\propto\exp\left[ -\frac{1}{2}\sum_{i,j}(\Vec{y}_{ij}-\Vec{\mu}-\Vec{a}_i)^\top \Sigma_e^{-1}(\Vec{y}_{ij}-\Vec{\mu}-\Vec{a}_i)-\frac{1}{2}\sum_{i}a_i^\top \Sigma_a^{-1}a_i \right].
\end{equation}
A standard Gibbs Sampler to sample from the posterior distribution $p(\Vec{\mu},\Vec{\ba}\mid\Vec{\by})$ is defined as follows.

\vspace{2mm}
\noindent
{\bf Sampler GS(0)} \textit{Initialize $\Vec{\mu}(0)$ and $\Vec{\ba}(0)$ and then iterate\\
1. Sample $\Vec{\mu}(s+1)$ from $p(\Vec{\mu}\mid \Vec{\ba}(s),\Vec{\by})$;\\
2. Sample $\Vec{a}_i(s+1)$ from $p(\Vec{a}_i\mid \Vec{\mu}(s+1),\Vec{\by})$ for  $i=1,\ldots, I$, independently.}
\vspace{2mm}

Using the same notations as in Z\&R, we define $\bar{\Vec{a}}=\sum_{i}\Vec{a}_i/I$ to be mean and
\begin{equation*}
    \delta\Vec{a}_i=\Vec{a}_i-\bar{\Vec{a}},\quad\delta\Vec{\ba}=(\delta\Vec{a}_1,\cdots,\delta\Vec{a}_I)
\end{equation*}
as the residual. Given this notation, we derive the following factorization
\begin{equation}
    p(\Vec{\mu},\Vec{\ba}\mid\Vec{\by})=p(\Vec{\mu},\bar{\Vec{a}}\mid\Vec{\by})\times p(\delta\Vec{\ba}\mid\Vec{\by}).
\end{equation}
This factorization paves the way for the following multigrid decomposition.

Before stating and proving our result, we introduce a lemma without proof to compute the $L^2$ convergence rate of some two-component Gaussian Gibbs sampler.

\begin{Lemma}\label{maximal correlation}
Let the target distribution $\pi(q_1,q_2)$, where $q_1,q_2\in\mathbb{R}^\ell$, be a $2\ell$-dimensional Gaussian distribution with $\text{var}(q_1)=\Sigma_{11}$,
$\text{var}(q_2)=\Sigma_{22}$, and $\text{cov}(q_1,q_2)=\Sigma_{12}$. 
The convergence rate of the  Gibbs sampler that iterates between conditional sampling   $[q_1 \mid q_2]$ and  $[q_2\mid q_1]$  is equal to the squared spectral norm 
$\left\Vert\Sigma_{11}^{-1/2}\Sigma_{12}\Sigma_{22}^{-1/2}\right\Vert^2_2.$
\end{Lemma}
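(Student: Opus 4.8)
The plan is to use the fact that a Gibbs sampler targeting a Gaussian distribution is a \emph{linear Gaussian} (first-order autoregressive) Markov chain, so that its $L^2$ behaviour is governed by a single deterministic matrix, and then to identify the spectral radius of that matrix with the claimed squared spectral norm. Without loss of generality I would first center the target, assuming $\mathbb{E}[q_1]=\mathbb{E}[q_2]=0$, since the Gibbs kernel is translation-equivariant and the geometric rate is unaffected by a deterministic shift. The standard Gaussian conditional formulas give $\mathbb{E}[q_1\mid q_2]=\Sigma_{12}\Sigma_{22}^{-1}q_2$ and $\mathbb{E}[q_2\mid q_1]=\Sigma_{21}\Sigma_{11}^{-1}q_1$ (with $\Sigma_{21}=\Sigma_{12}^\top$), so one full sweep of the sampler propagates the conditional mean of the $q_2$-marginal chain by the matrix
\[
B=\Sigma_{21}\Sigma_{11}^{-1}\Sigma_{12}\Sigma_{22}^{-1}.
\]
The conditional covariance injected at each step is fixed and does not affect the geometric rate, so the rate should be read off from $B$ alone.

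Next I would compute the spectrum of $B$. Writing $C=\Sigma_{11}^{-1/2}\Sigma_{12}\Sigma_{22}^{-1/2}$, a direct similarity transformation yields $\Sigma_{22}^{-1/2}B\,\Sigma_{22}^{1/2}=C^\top C$, so $B$ is similar to the symmetric positive semidefinite matrix $C^\top C$. Hence all eigenvalues of $B$ are real and nonnegative, and its spectral radius equals the largest eigenvalue of $C^\top C$, which is exactly $\|C\|_2^2=\bigl\|\Sigma_{11}^{-1/2}\Sigma_{12}\Sigma_{22}^{-1/2}\bigr\|_2^2$. This already isolates the target quantity; the singular values of $C$ are precisely the canonical correlations between $q_1$ and $q_2$, and we want the largest one squared.

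The main obstacle is justifying that the $L^2$ convergence rate, defined over \emph{all} square-integrable functions and not merely linear ones, coincides with the spectral radius of $B$. I would handle this through the product-of-projections viewpoint: letting $\Pi_1,\Pi_2$ denote the conditional-expectation (orthogonal) projections onto $L^2$-functions of $q_1$ and of $q_2$, the marginal $q_2$-chain has, on mean-zero functions, the self-adjoint forward operator $\Pi_2\Pi_1\Pi_2$, whose norm equals the squared maximal correlation $\gamma^2$ between $q_1$ and $q_2$. The remaining point, genuinely Gaussian in nature, is that for a jointly Gaussian pair the maximal correlation is attained by \emph{linear} transformations and therefore equals the largest canonical correlation $\sigma_{\max}(C)$; this is the classical Lancaster--Gebelein fact, provable by expanding the joint density in tensorized Hermite polynomials, which diagonalize $\Pi_1\Pi_2$ so that the degree-one block reproduces the singular values of $C$ while every higher-degree block contributes only products of canonical correlations, all strictly smaller in the nondegenerate case. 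Combining the three steps gives rate $=\gamma^2=\sigma_{\max}(C)^2=\|C\|_2^2$, matching the elementary computation of the spectral radius of $B$ and proving the claim.
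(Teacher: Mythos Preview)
Your proposal is correct and essentially follows the two routes the paper itself points to. The paper does not give a self-contained proof of this lemma; its remark invokes Theorem~1 of \cite{roberts1997updating} (the Gaussian Gibbs sampler is a multivariate AR(1), so the rate is read off from the spectral radius of the iteration matrix) and, alternatively, Section~5.1 of \cite{liu1994covariance} (the rate equals the squared maximal correlation). Your write-up is precisely a careful splice of these: the AR(1) computation yields the matrix $B$ and the similarity $\Sigma_{22}^{-1/2}B\,\Sigma_{22}^{1/2}=C^\top C$ identifies its spectral radius as $\|C\|_2^2$, while the projection/maximal-correlation step together with the Lancaster--Gebelein/Hermite argument certifies that linear functions already attain the $L^2$ rate. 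There is nothing to correct; if anything, your treatment is more explicit than the paper's remark about why the linear spectral radius governs the full $L^2$ rate.
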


\noindent
{\it Remark.} This lemma is an easy consequence of Theorem 1 in \cite{roberts1997updating}, in which the generated Markov chain is recognized as a multivariate AR(1) process. 
See also Section 5.1, \cite{liu1994covariance}, for an elementary proof based on \textit{maximal correlations}, as this quantity can also be interpreted as the \textit{maximal correlation} between $q_1$ and $q_2$.
\vspace{2mm}

\begin{Theorem}\label{thm:non-centering}
Let $\{\Vec{\mu}(t),\Vec{\ba}(t)\}$ be the Markov chain generated by either the standard Gibbs sampler. Then the the functionals $\{\delta\Vec{\ba}(t)\}$ and $\{\Vec{\mu}(t),\bar{\Vec{a}}(t)\}$ evolve as two independent Markov chains. Furthermore, the $L^2$-convergence rate of the sampler is
\begin{equation}\label{non-centering rate}
    \rho_0=\left\Vert\left(J\Sigma_e^{-1}\right)^{1/2}\left(\Sigma_a^{-1}+J\Sigma_e^{-1}\right)^{-1/2} \right\Vert_{2}^2.
\end{equation}
\end{Theorem}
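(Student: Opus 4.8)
The plan is to reduce the full $(\Vec{\mu},\Vec{\ba})$ sampler to a two-dimensional marginal chain on $(\Vec{\mu},\bar{\Vec{a}})$, prove that the residual coordinate decouples, and then invoke Lemma~\ref{maximal correlation}. First I would write out the two Gibbs conditionals of GS(0) explicitly. Completing the square in the joint posterior gives
\begin{equation*}
\Vec{\mu}\mid\Vec{\ba},\Vec{y}\sim\mathcal{N}\!\left(\bar{\Vec{y}}-\bar{\Vec{a}},\,\tfrac{1}{IJ}\Sigma_e\right),\qquad \Vec{a}_i\mid\Vec{\mu},\Vec{y}\sim\mathcal{N}\!\left(VJ\Sigma_e^{-1}(\bar{\Vec{y}}_i-\Vec{\mu}),\,V\right),
\end{equation*}
where $\bar{\Vec{y}}_i=J^{-1}\sum_j\Vec{y}_{ij}$, $\bar{\Vec{y}}=(IJ)^{-1}\sum_{ij}\Vec{y}_{ij}$, and $V=(\Sigma_a^{-1}+J\Sigma_e^{-1})^{-1}$. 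Two features are immediate and drive the whole argument: the $\Vec{\mu}$-update depends on $\Vec{\ba}$ only through $\bar{\Vec{a}}$, and, because the $\Vec{a}_i$ are conditionally independent given $\Vec{\mu}$, the induced law of $\bar{\Vec{a}}$ given $\Vec{\mu}$ is $\mathcal{N}(VJ\Sigma_e^{-1}(\bar{\Vec{y}}-\Vec{\mu}),\,I^{-1}V)$.

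Next I would establish the claimed independence by analyzing the residual $\delta\Vec{a}_i(s+1)=\Vec{a}_i(s+1)-\bar{\Vec{a}}(s+1)$. Subtracting the two conditional means shows that the conditional mean of $\delta\Vec{a}_i(s+1)$ equals $VJ\Sigma_e^{-1}(\bar{\Vec{y}}_i-\bar{\Vec{y}})$, in which the dependence on $\Vec{\mu}(s+1)$ cancels; and since the $\Vec{a}_i$ share the common conditional covariance $V$, one checks $\mathrm{cov}(\delta\Vec{a}_i,\bar{\Vec{a}}\mid\Vec{\mu})=I^{-1}V-I^{-1}V=0$. As $(\bar{\Vec{a}},\delta\Vec{\ba})$ is conditionally jointly Gaussian given $\Vec{\mu}(s+1)$, the vanishing cross-covariance upgrades to conditional independence, while the $\Vec{\mu}$-free mean and covariance show that the law of $\delta\Vec{\ba}(s+1)$ does not depend on $\Vec{\mu}(s+1)$ at all. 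Together these facts factorize the one-step kernel into a kernel acting on $(\Vec{\mu},\bar{\Vec{a}})$ and a kernel that redraws $\delta\Vec{\ba}$ from a fixed law (namely $p(\delta\Vec{\ba}\mid\Vec{y})$, consistent with the stated factorization). Hence $\{\delta\Vec{\ba}(t)\}$ and $\{\Vec{\mu}(t),\bar{\Vec{a}}(t)\}$ are independent Markov chains, and $\{\delta\Vec{\ba}(t)\}$ is in fact i.i.d. for $t\ge 1$.

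For the rate, I would note that sampling $\Vec{\ba}\mid\Vec{\mu}$ and projecting onto $\bar{\Vec{a}}$ yields the correct conditional $p(\bar{\Vec{a}}\mid\Vec{\mu},\Vec{y})$ of the marginal Gaussian target, so the $(\Vec{\mu},\bar{\Vec{a}})$ chain is exactly the two-component Gibbs sampler to which Lemma~\ref{maximal correlation} applies. Reading the regression matrices off the two conditional means gives $B_{\Vec\mu}=-\mathbb{I}_\ell$ and $B_{\bar{\Vec a}}=-VJ\Sigma_e^{-1}$, and the rate equals the spectral radius of their product $B_{\Vec\mu}B_{\bar{\Vec a}}=(\Sigma_a^{-1}+J\Sigma_e^{-1})^{-1}J\Sigma_e^{-1}$. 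A similarity transform by $(J\Sigma_e^{-1})^{1/2}$ sends this matrix to $(J\Sigma_e^{-1})^{1/2}(\Sigma_a^{-1}+J\Sigma_e^{-1})^{-1}(J\Sigma_e^{-1})^{1/2}$, whose largest eigenvalue is precisely $\rho_0=\|(J\Sigma_e^{-1})^{1/2}(\Sigma_a^{-1}+J\Sigma_e^{-1})^{-1/2}\|_2^2$; since the two matrices are similar, this is the rate of the marginal chain. Finally, because the chains are independent and $\{\delta\Vec{\ba}(t)\}$ converges in a single step (rate $0$), a tensor-product argument on $L^2_0$ shows the $L^2$ rate of the full sampler is $\max(\rho_0,0)=\rho_0$.

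I expect the main obstacle to be the independence step rather than the rate computation: one must verify both that the $\Vec{\mu}$-dependence genuinely cancels in the conditional mean of $\delta\Vec{\ba}$ and that the conditional cross-covariance vanishes, and then argue that these conditional statements combine into a clean factorization of the transition kernel. The rate identity is then a routine consequence of Lemma~\ref{maximal correlation} together with the similarity $B_{\Vec\mu}B_{\bar{\Vec a}}\sim(J\Sigma_e^{-1})^{1/2}(\Sigma_a^{-1}+J\Sigma_e^{-1})^{-1}(J\Sigma_e^{-1})^{1/2}$.
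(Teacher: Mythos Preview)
Your proposal is correct and follows essentially the same route as the paper: establish the two conditional identities (that the $\Vec{\mu}$-update depends on $\Vec{\ba}$ only through $\bar{\Vec{a}}$, and that $\delta\Vec{\ba}$ is conditionally independent of $\bar{\Vec{a}}$ with a $\Vec{\mu}$-free law), conclude that $\{\delta\Vec{\ba}(t)\}$ is exact sampling so the rate is governed by the reduced $(\Vec{\mu},\bar{\Vec{a}})$ chain, and then invoke Lemma~\ref{maximal correlation}. The only cosmetic difference is that the paper reads the rate directly from the precision matrix of $p(\Vec{\mu},\bar{\Vec{a}}\mid\Vec{y})$, whereas you compute it via the product of the conditional regression matrices and a similarity transform---this is exactly the AR(1) viewpoint of \cite{roberts1997updating} alluded to in the remark after Lemma~\ref{maximal correlation}, and yields the same $\rho_0$.
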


\begin{proof}
The decomposition directly follows from the following two identities
\begin{align}
    p\left[\Vec{\mu}(s+1)\mid \Vec{\ba}(s),\Vec{\by}\right]&=p\left[\Vec{\mu}(s+1)|\Vec{\bar{a}}(s),\Vec{\by}\right],\\
    p\left[\Vec{\bar{a}}(s+1),\delta \Vec{\ba}(s+1)\mid \Vec{\mu}(s+1),\Vec{\by}\right]&=p\left[\Vec{\bar{a}}(s+1)|\Vec{\mu}(s),\Vec{\by}\right]\times p\left[\delta\Vec{\ba}(s+1)\mid \Vec{y}\right].
\end{align}
Moreover, the latter identity further implies that $\{\delta\Vec{\ba}(t)\}$ carries out exact sampling. So the convergence rate of $\{\Vec{\mu}(t),\Vec{\ba}(t)\}$ is actually determined by the rate of $\{\Vec{\mu}(t),\bar{\Vec{a}}(t)\}$. The latter chain converges to the following joint-normal stationary distribution
\begin{equation*}
    p(\Vec{\mu},\bar{\Vec{a}}\mid\Vec{y})\propto\exp\left[-\frac{IJ}{2}\Vec{\mu}^\top \Sigma_e^{-1}\Vec{\mu}-\frac{1}{2}\bar{\Vec{a}}^\top \left(I\Sigma_a^{-1}+IJ\Sigma_e^{-1}\right)\bar{\Vec{a}}-IJ\Vec{\mu}^\top \Sigma_e^{-1}\bar{\Vec{a}}+IJ\bar{\Vec{y}}^\top \Sigma_e^{-1}(\Vec{\mu}+\bar{\Vec{a}})\right],
\end{equation*}
where we write $\bar{\Vec{y}}\triangleq\sum_{i,j}\Vec{y}_{ij}/IJ$. This is a Markov chain in a $2\ell$-dimensional space induced by the block-wise two-component Gibbs sampler. In contrast, the original chain is of dimension $(I+1)\ell$. The final result then follows from Lemma \ref{maximal correlation}.
\end{proof}

\vspace{3mm}

\noindent
{\it Remark.} If we choose dimension $\ell=1$ and replace $\Sigma_e$ and $\Sigma_a$ with $\sigma_e^2$ and $\sigma_a^2$, respectively, the convergence rate becomes
$$
\rho_0=\dfrac{J\sigma_e^{-2}}{\sigma_a^{-2}+J\sigma_e^{-2}},
$$
which coincides with Proposition 3 in \cite{papaspiliopoulos2020scalable}.

\subsection{Convergence rate for centering model}
Inspired by Z\&R, we seek to give a theoretical guidance towards centering \eqref{non-centering} or non-centering \eqref{centering} parametrizations.

\vspace{2mm}
\noindent
{\bf Model S2m} (Symmetric two-level model with \textbf{centering} parametrization). \textit{Suppose
\begin{equation}\label{centering}
    \Vec{y}_{ij}\sim\mathcal{N}(\Vec{\alpha}_i,\Sigma_e),\quad\Vec{\alpha}_i\sim\mathcal{N}(\Vec{\mu},\Sigma_a), \ i\in[1:I],  j\in [1:J],
\end{equation}
where $\Vec{y}_{ij},\Vec{\mu},\Vec{\alpha}_i 
\in\mathbb{R}^\ell$. Same as before, a flat prior is imposed on $\Vec{\mu}$. Here $\Sigma_e$ and $\Sigma_a$ are two positive definite $\ell\times \ell$ matrices.}
\vspace{2mm}

\noindent
{\bf Sampler GS(1)} \textit{Initialize $\Vec{\mu}(0)$ and $\Vec{\balpha}(0)$ and then iterate\\
1. Sample $\Vec{\mu}(s+1)$ from $p(\Vec{\mu}\mid \Vec{\balpha}(s),\Vec{\by})$;\\
2. Sample $\Vec{\alpha}_i(s+1)$ from $p(\Vec{\alpha}_i\mid \Vec{\mu}(s+1),\Vec{\by})$ for   $i=1,\ldots, I$ independently.}
\vspace{2mm}

Almost in the same manner, we offer the following theorem.
\begin{Theorem}\label{thm:centering}
Let $\{\Vec{\mu}(t),\Vec{\balpha}(t)\}$ be the Markov chain generated by the  sampler GS(1). Then the  functionals $\{\delta\Vec{\balpha}(t)\}$ and $\{\Vec{\mu}(t),\bar{\Vec{\alpha}}(t)\}$ evolve as two independent Markov chains. Furthermore, the $L^2$-convergence rate of $\{\Vec{\mu}(t),\Vec{\balpha}(t)\}$ is
\begin{equation}\label{centering rate}
    \rho_1=\left\Vert \left(\Sigma_a^{-1}\right)^{1/2}\left(\Sigma_a^{-1}+J\Sigma_e^{-1}\right)^{-1/2} \right\Vert_{2}^2.
\end{equation}
\end{Theorem}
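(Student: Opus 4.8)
The plan is to follow the architecture of the proof of Theorem~\ref{thm:non-centering}. First I would show that GS(1) splits into two independent subchains, $\{\delta\Vec{\balpha}(t)\}$ and $\{\Vec{\mu}(t),\bar{\Vec{\alpha}}(t)\}$, the former performing exact i.i.d.\ sampling from its stationary law; the overall $L^2$ rate is then inherited from the two-component Gaussian Gibbs sampler on $(\Vec{\mu},\bar{\Vec{\alpha}})$, to which Lemma~\ref{maximal correlation} applies. The decomposition rests on the two identities
\begin{align*}
p[\Vec{\mu}(s+1)\mid\Vec{\balpha}(s),\Vec{y}] &= p[\Vec{\mu}(s+1)\mid\bar{\Vec{\alpha}}(s),\Vec{y}],\\
p[\bar{\Vec{\alpha}}(s+1),\delta\Vec{\balpha}(s+1)\mid\Vec{\mu}(s+1),\Vec{y}] &= p[\bar{\Vec{\alpha}}(s+1)\mid\Vec{\mu}(s+1),\Vec{y}]\,p[\delta\Vec{\balpha}(s+1)\mid\Vec{y}].
\end{align*}
The first is immediate: under the flat prior $\Vec{\mu}$ is conditionally independent of $\Vec{y}$ given $\Vec{\balpha}$, and $p(\Vec{\mu}\mid\Vec{\balpha})=\mathcal{N}(\bar{\Vec{\alpha}},\,\Sigma_a/I)$ depends on $\Vec{\balpha}(s)$ only through $\bar{\Vec{\alpha}}(s)$.

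The crux is the second identity. Given $\Vec{\mu}$, the updates $\Vec{\alpha}_i(s+1)$ are independent Gaussians sharing the covariance $C=(J\Sigma_e^{-1}+\Sigma_a^{-1})^{-1}$ with means $C(J\Sigma_e^{-1}\bar{\Vec{y}}_i+\Sigma_a^{-1}\Vec{\mu})$, where $\bar{\Vec{y}}_i=\sum_j\Vec{y}_{ij}/J$. Two things need checking. (i) Centering removes the $\Vec{\mu}$-dependence: since the drift $C\Sigma_a^{-1}\Vec{\mu}$ is common to every $i$, it cancels in $\delta\Vec{\alpha}_i$, leaving $\mathbb{E}[\delta\Vec{\alpha}_i\mid\Vec{\mu}]=C\,J\Sigma_e^{-1}(\bar{\Vec{y}}_i-\bar{\Vec{y}})$, and the conditional covariances are $\Vec{\mu}$-free as well, so the law of $\delta\Vec{\balpha}$ given $\Vec{\mu}$ does not involve $\Vec{\mu}$. (ii) Residuals and mean are conditionally uncorrelated: by independence of the $\Vec{\alpha}_i$, $\text{cov}(\delta\Vec{\alpha}_i,\bar{\Vec{\alpha}}\mid\Vec{\mu})=\tfrac1I C-\tfrac1I C=0$, and Gaussianity upgrades this to independence. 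Together (i)--(ii) give the factorization and show that $\delta\Vec{\balpha}(s+1)\sim p(\delta\Vec{\balpha}\mid\Vec{y})$ afresh at each step; hence $\{\delta\Vec{\balpha}(t)\}$ is exact sampling and the rate of $\{\Vec{\mu}(t),\Vec{\balpha}(t)\}$ coincides with that of $\{\Vec{\mu}(t),\bar{\Vec{\alpha}}(t)\}$.

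For the rate, substituting $\Vec{\alpha}_i=\bar{\Vec{\alpha}}+\delta\Vec{\alpha}_i$ into the joint posterior and using $\sum_i\delta\Vec{\alpha}_i=0$ confirms that $(\Vec{\mu},\bar{\Vec{\alpha}})$ is jointly $2\ell$-dimensional Gaussian, with precision block $\left(\begin{smallmatrix} I\Sigma_a^{-1} & -I\Sigma_a^{-1}\\ -I\Sigma_a^{-1} & IJ\Sigma_e^{-1}+I\Sigma_a^{-1}\end{smallmatrix}\right)$, so Lemma~\ref{maximal correlation} is in force. Rather than invert this block, I would read the two conditional regression matrices straight from the Gibbs conditionals: $B_{\Vec{\mu}\mid\bar{\Vec{\alpha}}}=\mathbb{I}_\ell$ from step~1 and $B_{\bar{\Vec{\alpha}}\mid\Vec{\mu}}=(J\Sigma_e^{-1}+\Sigma_a^{-1})^{-1}\Sigma_a^{-1}$ from step~2. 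In a joint Gaussian these equal $\Sigma_{12}\Sigma_{22}^{-1}$ and $\Sigma_{21}\Sigma_{11}^{-1}$, and since $B_{\Vec{\mu}\mid\bar{\Vec{\alpha}}}B_{\bar{\Vec{\alpha}}\mid\Vec{\mu}}=\Sigma_{12}\Sigma_{22}^{-1}\Sigma_{21}\Sigma_{11}^{-1}$ is similar to $MM^\top$ with $M=\Sigma_{11}^{-1/2}\Sigma_{12}\Sigma_{22}^{-1/2}$, its largest eigenvalue equals $\|M\|_2^2$, the quantity in Lemma~\ref{maximal correlation}. Thus the rate is $\lambda_{\max}\big((J\Sigma_e^{-1}+\Sigma_a^{-1})^{-1}\Sigma_a^{-1}\big)$, which after a similarity transform by $(\Sigma_a^{-1})^{1/2}$ becomes $\|(\Sigma_a^{-1})^{1/2}(\Sigma_a^{-1}+J\Sigma_e^{-1})^{-1/2}\|_2^2=\rho_1$, mirroring the non-centering computation with $\Sigma_a^{-1}$ now playing the role of $J\Sigma_e^{-1}$.

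I expect the second conditional identity to be the main obstacle: one must verify simultaneously that centering annihilates the $\Vec{\mu}$-drift in $\delta\Vec{\balpha}$ and that the residual vector is conditionally uncorrelated with $\bar{\Vec{\alpha}}$, as these are precisely the two facts that license treating $\{\delta\Vec{\balpha}(t)\}$ as exact sampling. The subsequent rate algebra is routine once $B_{\Vec{\mu}\mid\bar{\Vec{\alpha}}}$ and $B_{\bar{\Vec{\alpha}}\mid\Vec{\mu}}$ are in hand; a convenient sanity check is that in the scalar case $\rho_0+\rho_1=1$, reflecting the complementary behavior of the two parametrizations.
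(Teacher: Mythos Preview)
Your proposal is correct and follows essentially the same architecture the paper intends (the paper omits the proof of Theorem~\ref{thm:centering}, simply indicating it proceeds ``almost in the same manner'' as Theorem~\ref{thm:non-centering}). Your only deviation is in the rate extraction: rather than reading off $\Sigma_{11},\Sigma_{12},\Sigma_{22}$ from the $2\ell$-dimensional precision and plugging into Lemma~\ref{maximal correlation} directly, you compute the two conditional regression matrices and use the similarity $B_{\Vec{\mu}\mid\bar{\Vec{\alpha}}}B_{\bar{\Vec{\alpha}}\mid\Vec{\mu}}\sim MM^\top$, which is an equivalent and arguably cleaner route to the same spectral norm.
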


\vspace{3mm}

\noindent
{\bf Optimal Parameterization Strategy:} If $\rho_0\leq\rho_1$, then choose the non-centering parameterization \eqref{non-centering}; otherwise, choose the centering parameterization \eqref{centering}.
\vspace{2mm}

When dimension $\ell=1$, \eqref{centering rate} becomes 
$\rho_1=\sigma_a^{-2}/(\sigma_a^{-2}+J\sigma_e^{-2})$. This strategy can be adaptively used when the variances are unknown. Specifically, in one iteration, after sampling $\hat{\sigma}_a^2,\hat{\sigma}_e^2$, we  compare $J\hat{\sigma}_e^{-2}/(\hat{\sigma}_a^{-2}+J\hat{\sigma}_e^{-2})$ and $\hat{\sigma}_a^{-2}/(\hat{\sigma}_a^{-2}+J\hat{\sigma}_e^{-2})$, and choose the optimal parameterization accordingly. A direct benefit is that we can always achieve a convergence rate bounded by $1/2$ since $\rho_0+\rho_1=1$, regardless of what values $\sigma_a^2,\sigma_e^2$ are \citep{papaspiliopoulos2003non}. Corollary 
2 in Z\&R proposes an optimal parametrization strategy for  $3$-level models and gives a constant rate upper bound $2/3$ therein.

However, in a multi-dimensional case with $\ell>1$, the rates found in Theorem \ref{thm:non-centering} and Theorem \ref{thm:centering} do not necessarily sum up to $1$. Though the parameterization strategy still applies, it does not necessarily give a constant rate upper bound. If both covariance matrices are diagonal, i.e., $\Sigma_a=\text{diag}(1/\tau^a_1,\cdots,1/\tau^a_\ell)$ and  $\Sigma_e=\text{diag}(1/\tau^e_1,\cdots,1/\tau^e_\ell)$, then we have
\begin{equation*}
    \rho_0=\max_{1\leq i\leq \ell}\left[\frac{J\tau^e_i}{\tau^a_i+J\tau^e_i}\right], \ \ \ 
    \rho_1=\max_{1\leq i\leq \ell}\left[\frac{\tau^a_i}{\tau^a_i+J\tau^e_i}\right].
\end{equation*}
Applying the optimal paramterization strategy component-wise is of interest in this non-correlated case. That is, we may introduce a ``centering" indicator variable $C$ of dimension $\ell$, indicating which of the $\ell$ components use centering and which use non-centering parameterization. In this way, we may still be able to obtain the rate bound $1/2$.

When  $\Sigma_a$ and $\Sigma_e$ become general non-diagonal covariance matrices, the picture becomes  more complicated. It will be of great interest to develop some methodological guidance on how to approach this problem. The constant rate bound $1/2$ as discussed above is no longer guaranteed, and it is entirely possible that both rates are close to 1. We speculate that one may extend the ``centering" indicator $C$ to be a continuous vector to allow ``partial-centering'' (more about this issue in Section~\ref{sec:partial-centering}).

It is also not too difficult to extend these results to more complex structures such as  three-level vector hierarchical models and vector crossed-effects models, although the formulae would grow more  complicated and the design of the optimal parameterization may no longer be possible. The authors' insights and suggestions along this direction would be very much welcome.

\section{Incorporating Regression Covariates}

Zanella and Roberts mainly focus on hierarchical models with certain symmetry conditions for data without individual-level covariates. 
Mixed-effects models, which accommodate individual-level variability and are very commonly used in practice, seem to have not been directly covered by Z\&R. Our goal here is to consider possible ways to extend the authors' multigrid decomposition technique to this more complex class of models.

\subsection{Linear mixed effects models}
To extend and see the limits of  multigrid decomposition, we consider the following simple extension, which just replaces the intercept term $\mu$ with a linear combination of $p$ covariates with a fixed coefficient vector. Previously, \cite{gao2019estimation} attempted to tackle the computational efficiency of this model \eqref{regression}. But their results give loose bounds while requiring mild conditions.

\vspace{2mm}
\noindent
{\bf Model SR} (Symmetric two-level mixed-effect model). \textit{Suppose
\begin{equation}\label{regression}
    y_{ij}=X_{ij}^\top \beta+a_i+\epsilon_{ij},\ \ i\in[1:I],  j\in [1:J],
\end{equation}
where $\epsilon_{ij}$ is i.i.d. normal random variables with mean 0 and variance $\sigma^2_e$. Moreover, $X_{ij},\beta\in\mathbb{R}^p$ (column vectors) are known covariates and unknown coefficients respectively. We then impose a standard Bayesian model specification assuming $a_i\sim\mathcal{N}(0,\sigma^2_a)$ and $\beta\sim\mathcal{N}(0,\Sigma_0)$.}
\vspace{2mm}

Essential full-rank conditions should be imposed on the design matrix. Requiring $p<I$, we denote the $I\times p$ matrix as
\begin{equation*}
\bar{X}\triangleq (\bar{X}_1,\ldots, \bar{X}_I)^\top ,
\end{equation*}
where $\bar{X}_i=J^{-1}\sum_{j}X_{ij}\in\mathbb{R}^p$.
A further natural requirement is that $\bar{X}$ is of rank $p$. Then, we can  define a $p\times I$ matrix $P=(\bar{X}^\top \bar{X})^{-1/2}\bar{X}^\top $.  We also introduce another $(I-p)\times I$ matrix $L$ such that $L^\top  L +P^\top P=\mathbb{I}_I$ (i.e., the identity matrix of dimension $I$). Note that $P^\top  P =\bar{X}(\bar{X}^\top \bar{X})^{-1} \bar{X}^\top $ and $PP^\top =\mathbb{I}_p$. Let $\bX =\{X_{ij} \}$.

\vspace{2mm}
\noindent
{\bf Sampler GS (Regression)} \textit{Initialize $\beta(0)$ and $\ba(0)$ and then iterate\\
1. Sample $\beta(s+1)$ from $p(\beta\mid \ba(s),\bX,\by)$;\\
2. Sample $a_i(s+1)$ from $p(a_i\mid \ba(s+1),\bX,\by)$ for all $i$.}
\vspace{2mm}

\begin{Theorem}\label{thm:regression}
Let $\{\beta(t),\ba(t)\}$ be the Markov chain generated by the standard Gibbs sampler. Then the two functionals $\{L\ba(t)\}$ and $\{\beta(t),\bar{X}^\top \ba(t)\}$ evolve as two independent Markov chains. Furthermore, the $L^2$-convergence rate of $\{\beta(t),\ba(t)\}$ is
\begin{equation}\label{rate regression}
    \rho=\frac{J^2\sigma_e^{-4}}{\sigma_a^{-2}+J\sigma_e^{-2}}\left\Vert \left(\bar{X}^\top \bar{X}\right)^{1/2}\left(\Sigma_0^{-1}+\sum_{i,j}X_{ij} X_{ij}^\top  \sigma_e^{-2}\right)^{-1/2} \right\Vert_{2}^2.
\end{equation}
\end{Theorem}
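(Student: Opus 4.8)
The plan is to mirror the proof of Theorem~\ref{thm:non-centering}, with the weighted statistic $\bar{X}^\top\ba$ playing the role of the group mean $\bar{\Vec{a}}$ and $L\ba$ playing the role of the residual $\delta\Vec{\ba}$. First I would write out the two full conditionals of Sampler GS (Regression) explicitly. Gathering the $\beta$-terms shows that $\beta\mid\ba,\bX,\by$ is Gaussian with precision $\Lambda\triangleq\Sigma_0^{-1}+\sigma_e^{-2}\sum_{i,j}X_{ij}X_{ij}^\top$, whose mean depends on $\ba$ only through $\sum_{i,j}X_{ij}a_i=J\bar{X}^\top\ba$; gathering the $a_i$-terms shows the $a_i\mid\beta,\bX,\by$ are independent, each Gaussian with precision $\tau\triangleq\sigma_a^{-2}+J\sigma_e^{-2}$ and mean proportional to $\bar{y}_i-\bar{X}_i^\top\beta$. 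In vector form, Step~2 reads $\ba(s+1)=\frac{J\sigma_e^{-2}}{\tau}(\bar{\by}-\bar{X}\beta(s+1))+\xi$, where $\bar{\by}=(\bar{y}_1,\dots,\bar{y}_I)^\top$ and $\xi\sim\mathcal{N}(0,\tau^{-1}\mathbb{I}_I)$.

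The decomposition then hinges on the single identity $L\bar{X}=0$, which follows from $L^\top L=\mathbb{I}_I-P^\top P$ together with $P^\top P\,\bar{X}=\bar{X}$. Applying $L$ to the Step~2 update kills the $\beta$-dependence and gives $L\ba(s+1)=\frac{J\sigma_e^{-2}}{\tau}L\bar{\by}+L\xi$, so $\{L\ba(t)\}$ performs exact i.i.d.\ sampling; since $\mathrm{cov}(L\xi,\bar{X}^\top\xi)=\tau^{-1}L\bar{X}=0$, it is moreover independent of $\{\beta(t),\bar{X}^\top\ba(t)\}$. This is the analogue of the two identities displayed in the proof of Theorem~\ref{thm:non-centering}, and it establishes the claimed independent-chain splitting, with the residual chain converging in a single step. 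Applying instead $\bar{X}^\top$ to the same update yields $\bar{X}^\top\ba(s+1)=\frac{J\sigma_e^{-2}}{\tau}\big(\bar{X}^\top\bar{\by}-\bar{X}^\top\bar{X}\,\beta(s+1)\big)+\bar{X}^\top\xi$, with $\bar{X}^\top\xi\sim\mathcal{N}(0,\tau^{-1}\bar{X}^\top\bar{X})$; together with Step~1 this exhibits $\{\beta(t),\bar{X}^\top\ba(t)\}$ as a two-component Gaussian Gibbs sampler on the $2p$-dimensional marginal $p(\beta,\bar{X}^\top\ba\mid\by)$, so Lemma~\ref{maximal correlation} applies with $\ell=p$.

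To obtain the rate I would avoid assembling the full joint covariance and instead read off the two conditional regression operators directly. From the conditionals above, $\mathbb{E}[\beta\mid\bar{X}^\top\ba]$ has $\bar{X}^\top\ba$-coefficient $-J\sigma_e^{-2}\Lambda^{-1}$ and $\mathbb{E}[\bar{X}^\top\ba\mid\beta]$ has $\beta$-coefficient $-\frac{J\sigma_e^{-2}}{\tau}\bar{X}^\top\bar{X}$. For a bivariate Gaussian the quantity in Lemma~\ref{maximal correlation} equals the spectral radius of the product of these two regression operators, so the rate is the largest eigenvalue of $\frac{J^2\sigma_e^{-4}}{\tau}\Lambda^{-1}\bar{X}^\top\bar{X}$, i.e.\ $\frac{J^2\sigma_e^{-4}}{\tau}\big\|(\bar{X}^\top\bar{X})^{1/2}\Lambda^{-1/2}\big\|_2^2$, which is precisely \eqref{rate regression}.

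I expect the crux to be the middle step: verifying rigorously that the projected pair $\{\beta(t),\bar{X}^\top\ba(t)\}$ is itself Markov and coincides with the Gibbs sampler targeting the marginal $p(\beta,\bar{X}^\top\ba\mid\by)$. The key point is that each full conditional depends on the conditioning block only through the retained functional---so that $p(\beta\mid\ba,\by)=p(\beta\mid\bar{X}^\top\ba,\by)$ equals the true marginal conditional, and likewise the pushforward of $p(\ba\mid\beta,\by)$ under $\bar{X}^\top$ is the true $p(\bar{X}^\top\ba\mid\beta,\by)$ with covariance $\tau^{-1}\bar{X}^\top\bar{X}$, which is nondegenerate of rank $p$ because $\bar{X}$ has rank $p$. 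Once this lumpability is in place, the passage from Lemma~\ref{maximal correlation} to the regression-operator form of the rate is routine linear algebra.
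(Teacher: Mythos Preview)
Your proposal is correct and follows essentially the same route as the paper: both arguments hinge on the orthogonal splitting $\ba\mapsto(\bar{X}^\top\ba,\,L\ba)$ via $L\bar{X}=0$, show that $\{L\ba(t)\}$ is exact i.i.d.\ sampling independent of the reduced chain $\{\beta(t),\bar{X}^\top\ba(t)\}$, and then obtain the rate from the two-block Gaussian structure. The only cosmetic difference is that the paper first factorizes the joint posterior density $p(\beta,\ba\mid\by,\bX)=p(\beta,\bar{X}^\top\ba\mid\by,\bX)\,p(L\ba\mid\by,\bX)$ using $\sum_i a_i^2=\|P\ba\|^2+\|L\ba\|^2$ and then invokes Lemma~\ref{maximal correlation} on the marginal, whereas you work directly with the conditional updates and compute the rate as the spectral radius of the product of the two regression operators---an equivalent reformulation of the same lemma.
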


\begin{proof}
It is easy to write down the likelihood function and prior:
\begin{align*}
    p(\by\mid \bX,\beta,\ba)&\propto \prod_{i=1}^I \prod_{j=1}^J  \exp\left[-\frac{1}{2\sigma_e^2}(y_{ij}-X_{ij}^\top  \beta-a_i)^2\right],\\
    p(\beta,\ba)&\propto \exp\left[-\frac{1}{2}\beta^\top \Sigma_0^{-1}\beta-\frac{1}{2\sigma^2_a}\sum_{i=1}^{I}a_i^2\right].
\end{align*}
The posterior distribution is
\begin{align*}
    p(\beta,\ba\mid \by,\bX)\propto&\exp\left[-\frac{1}{2}\beta^\top \Sigma_0^{-1}\beta-\frac{1}{2\sigma^2_a}\sum_{i}a_i^2-\frac{1}{2\sigma_e^2}\sum_{i,j}(y_{ij}
    -X_{ij}^\top  \beta-a_i)^2\right]\\
    \propto&\exp\left[-\frac{1}{2}\beta^\top \left(\Sigma_0^{-1}+\sum_{i,j}X_{ij} X_{ij}^\top \sigma_e^{-2}\right)\beta-\frac{1}{2}\left(\frac{1}{\sigma^2_a}+\frac{J}{\sigma^2_e}\right)\sum_i a_i^2-\frac{1}{\sigma_e^2}\sum_{i,j}a_i
    X_{ij}^\top \beta\right]\\
    &\exp\left[\dfrac{J}{\sigma^2_e}\sum_{i}a_i\bar{y}_i+\dfrac{1}{\sigma_e^2}\sum_{ij}y_{ij}
    X_{ij}^\top \beta\right].
\end{align*}
We should especially focus on the cross term
\begin{equation*}
    \sum_{ij}a_iX_{ij}^\top  \beta=
    \sum_{i=1} a_i (J \bar{X}_i^\top ) \beta
    =J\ba^\top \bar{X}\beta.
\end{equation*}
Furthermore, we also find that
\begin{equation*}
    \sum_i a_i^2=\ba^\top \ba=\Vert P\ba\Vert^2+\Vert L\ba\Vert^2=\ba^\top \bar{X}\left(\bar{X}^\top \bar{X}\right)^{-1}\bar{X}^\top \ba+\Vert L\ba\Vert^2.
\end{equation*}
The distribution of $\ba$ is actually equivalent to the joint distribution of $(\bar{X}^\top \ba,L\ba)$, since $(\bar{X},L^\top )$ is an invertible $I\times I$ matrix. Hence, we derive the following factorization
\begin{equation}
    p(\beta,\ba\mid \by,\bX)=p(\beta,\bar{X}^\top \ba\mid \by,\bX)\times p(L\ba\mid \by,\bX).
\end{equation}
We shall also deduce the following identities
\begin{align}
    p\left[\beta(s+1)\mid\ba(s),\by,\bX\right] =& p\left[\beta(s+1)\mid\bar{X}^\top \ba(s),\by,\bX\right],\\ p\left[\bar{X}^\top \ba(s+1),L\ba(s)\mid\beta(s),\by,\bX\right] =& p\left[\bar{X}^\top \ba(s+1)\mid\beta(s),\by,\bX\right]p\left[L\ba(s)\mid\by,\bX\right],
\end{align}
which imply the multigrid decomposition. Again,  convergence rate $\rho$ is controlled by the convergence rate of $\{\beta(t),\bar{X}^\top \ba(t)\}$. The joint target distribution of $\{\beta,\bar{X}^\top \ba\}$ is
\begin{align*}
    p(\beta,\bar{X}^\top \ba\mid \by,\bX)\propto&\exp\left[ -\frac{1}{2}\beta^\top \left(\Sigma_0^{-1}+\sum_{i,j}X_{ij}^\top  X_{ij}\sigma_e^{-2}\right)\beta-\frac{J}{\sigma_e^2}\ba^\top \bar{X}\beta\right]\\
    &\exp\left[ -\frac{1}{2}\left(\frac{1}{\sigma^2_a}+\frac{J}{\sigma^2_e}\right)\ba^\top \bar{X}\left(\bar{X}^\top \bar{X}\right)^{-1}\bar{X}^\top \ba\right]
\end{align*}
By Lemma \ref{maximal correlation},  the $L^2$ convergence rate is equal to the squared maximal correlation between $\beta$ and $\bar{X}^\top \ba$.
\end{proof}

\vspace{2mm}

\noindent
{\it Remark 1.} If we set $p=1,X_{ij}\equiv1$, then  $\bar{X}_i=1$, $\bar{X}^\top  \bar{X}=I$ and $\sum_{ij}X_{ij}^\top X_{ij}=IJ$. By placing a flat prior on $\mu$, we just replace $\Sigma_0^{-1}$ with $0$ in \eqref{rate regression}. Henceforth, Theorem \ref{thm:regression} reduces to $\rho=J\sigma_e^{-2}/(\sigma_a^{-2}+J\sigma_e^{-2})$, in this case.

\vspace{2mm}
\noindent
{\it Remark 2.} Theorem \ref{thm:regression} implies that summary statistics $\bar{X}^\top \ba$ of the lower level parameters are sufficient for the inference of upper level parameters $\beta$, with $L\ba$ marginalized out.

\vspace{2mm}
\noindent
{\it Remark 3.} Further note that \eqref{rate regression} is invariant if the variance terms are scaled simultaneously. Specifically, \eqref{rate regression} remains the same if we replace  $\left(\Sigma_0,\sigma_a^2,\sigma_e^2\right)$ by $\left(r\Sigma_0,r\sigma_a^2,r\sigma_e^2\right)$ where $r>0$. Moreover, another common rotation invariance in Bayesian linear regression applies to our result: \eqref{rate regression} remains the same if the pair $\left(\Sigma_0,X_{ij}\right)$   is replaced with $\left(R^\top \Sigma_0 R, RX_{ij} \right)$, where $R$ is a $p\times p$ orthogonal matrix.

\vspace{2mm}
We further note that the multigrid decomposition techniques do not naturally extend to more complex structures. Roughly speaking, both nested structures (such as $y_{ijk}=X_{ijk}^\top \beta+a_i+b_{ij}+\epsilon_{ijk}$) and crossed structures (such as $y_{ijk}=X_{ijk}^\top\beta+a_i+b_j+\epsilon_{ijk}$) would bring in a new cross term ``$\ba^\top \bb$", which is hard to handle. Can we still obtain an elegant decomposition for these models?

Indeed, many researchers have studied the general linear mixed-effects model:
\begin{equation}\label{general mixed effects}
    y=X^\top \beta+Z^\top u+\epsilon,
\end{equation}
where, in the first part, $\beta$ is common to all individuals as in a typical linear regression framework, and $u$ represents random effects (e.g., $Z$ can be dummy variables). For example, if $Z$ represents  one categorical variable with $I$ categories (using a dummy variable representation), this general form \eqref{general mixed effects} reduces to the simple model \eqref{regression} considered before.

Model \eqref{general mixed effects} with arbitrary  $Z$, however, has an identical mathematical representation as a standard linear regression model (i.e., one can simply treat $(X,Z)$ as covariates) although the prior distributions for $\beta$ and $u$ may differ substantially.  
Compared with the models handled in Z\&R, a key thing we have lost in the general model \eqref{general mixed effects} seems to  be the strong symmetry that can be used to decompose the involved variables into meaningful levels. A curious question is: how far we can push so that we can still have certain meaningful decomposition? 

\subsection{Implications  for  general linear regression models}
\subsubsection{Linear model formulation of two-level hierarchical model}
We can recast the multigrid decomposition of Z\&R for both centering and non-centering parameterizations of model \eqref{simplest} in the context of general Bayesian linear regression via covariate orthogonalization.

\vspace{2mm}
\noindent
{\bf Non-centering Parametrization.} By setting $\beta=(a_1,\cdots,a_I)^\top $ and
\begin{equation}\label{y,X}
y=(y_{11},y_{12},\cdots,y_{1I},y_{21},\cdots,y_{IJ})^\top \in\mathbb{R}^{IJ\times1},X=(\mathbb{I}_{I}\otimes\mathbf{1}_J)^\top\quad(\text{Kronecker product}),
\end{equation}
the simple linear model $y=\mu\mathbf{1}_{IJ}+X^\top\beta+\epsilon$ is equivalent to model \eqref{simplest}. The decomposition can be seen as imposing a linear transformation by replacing $\beta$ with $A\beta$, where the first row of $A$ is $\frac{1}{\sqrt{I}}\mathbf{1}_I^\top $ and $A$ is $I\times I$ orthogonal. With flat prior on $\mu$ and independent $\mathcal{N}(0,1/\tau_a)$ on each $a_i$, the posterior is
\begin{align*}
p(\beta,\mu\mid y,X)\propto&\exp\left(-\frac{1}{2}\beta^\top (\tau_e X X^\top+\tau_a\mathbb{I})\beta-\tau_e\mu\mathbf{1}^\top _{IJ}X^\top\beta-\frac{IJ\tau_e}{2}\mu^2\right)\\
=&\exp\left(-\frac{1}{2}(A\beta)^\top (\tau_e AX X^\top A^\top +\tau_a\mathbb{I})(A\beta)-\tau_e\mu[A\beta]_1-\frac{IJ\tau_e}{2}\mu^2\right).
\end{align*}
Moreover, $[AX X^\top A^\top ]_{i1}=[AXX^\top A^\top ]_{1i}=0$ for any $i\geq2$, which means that the first column of $X^\top A^\top $ is orthogonal to the other columns. 
Thus, $(\mu,[A\beta]_1)$ and $[A\beta]_{2:I}$ are independent \textit{a posteriori}. The first component corresponds to $(\mu,\bar{a})$ and the latter one is a representation of the residual $\delta a$. The multigrid decomposition is then built upon this orthogonalization.  To investigate the potential of this orthogonalization-based view, we consider the following general linear regression model.

\vspace{2mm}

\noindent
{\bf Centering Parametrization} Model \eqref{simplest} can also be written as
\begin{equation}
    y_{ij}\sim\mathcal{N}(\alpha_i,1/\tau_e),\quad\alpha_i\sim\mathcal{N}(\mu,1/\tau_a), \ i\in[1:I],  j\in [1:J].
\end{equation}
Set $y$, $X$, and $\beta$ exactly the same way as \eqref{y,X}, 
we have an equivalent model:
\begin{equation}
    y=X^\top \beta+\epsilon,\quad\beta\sim\mathcal{N}(\mu\mathbf{1}_I,1/\tau_a\mathbb{I}_{I}),\quad \epsilon\sim\mathcal{N}(0,1/\tau_e\mathbb{I}_{IJ}).
\end{equation}
Intuitively, we use a new prior on   $\beta$ with a latent variable $\mu$. With flat prior on $\mu$, the posterior is
\begin{align*}
    p(\beta,\mu\mid y,X)\propto\exp\left[-\frac{1}{2}\beta^\top \left(\tau_eXX^\top+\tau_a\mathbb{I}\right)\beta+\tau_a\mu\mathbf{1}_I^\top \beta-\frac{I\tau_a}{2}\mu^2\right].
\end{align*}
We can apply the same linear transformation $A$ as before.

\subsubsection{Extension to general linear models}
\vspace{2mm}
\noindent
{\bf Model LM} Suppose $X_1\in\mathbb{R}^{p_1\times n}$, $X_2\in\mathbb{R}^{p_2\times n}$ are two sets of covariates and consider
\begin{equation}\label{linear regression model}
    y=X_1^\top\beta_1+X_2^\top\beta_2+\epsilon,
\end{equation}
where $\beta_i\in\mathbb{R}^{p_i},(i=1,2)$ are unknown coefficients. Error $\epsilon\in\mathbb{R}^n$ is modeled as i.i.d. $\mathcal{N}(0,1/\tau_e)$. Independent priors $\mathcal{N}(0,1/\tau_1\mathbb{I}_{p_1})$ and $\mathcal{N}(0,1/\tau_2\mathbb{I}_{p_2})$ are imposed on $\beta_1$ and $\beta_2$ respectively.

\vspace{2mm}
Assume $r=\text{rank}(X_1 X_2^\top)$, we conduct SVD to find $B_i\in\mathbb{R}^{r\times p_i},(i=1,2)$ with orthonormal rows and diaginal $Q=\text{diag}(\lambda_1,\cdots,\lambda_r)$ such that
\begin{equation}
X_1 X_2^\top=B_1^\top QB_2.
\end{equation}
By constructing  orthogonal matrices $A_i\in\mathbb{R}^{p_i\times p_i}$, $i=1,2$, as  completions of $B_1$ and $B_2$, respectively, i.e., $A_i$ and $B_i$ share the same $r$ first rows, we have the following result.

\begin{Theorem}
Consider a Markov chain $\{\beta_1(s),\beta_2(s)\}$ generated by a systematic Gibbs sampler alternating between conditional sampling  $[\beta_1 \mid \beta_2]$ and $[\beta_2\mid\beta_1]$. Define $\theta_i=\left(\theta_i^{(1)},\cdots,\theta_i^{(p_i)}\right)^\top =A_1\beta_i$. Then, the evolution of $\{\theta_1(s),\theta_2(s)\}$ is equivalent to that of $\{\beta_1(s),\beta_2(s)\}$. If the first $r$ columns of $X_i^\top A_i^\top $ are orthogonal to the rest $p_i-r$ columns
\begin{equation}\label{special condition on X}
    [X_i^\top A_i^\top ]_{1:n,k_1}\perp[X_i^\top A_i^\top ]_{1:n,k_2}, \quad\forall k_1\leq r<k_2, 
\end{equation}
the evolutions of $\{\theta_{1}^{(1:r)}(s),\theta_{2}^{(1:r)}(s)\}$, $\{\theta_1^{((r+1):p_1)}\}$ and $\{\theta_2^{((r+1):p_2)}\}$ are independent.
\end{Theorem}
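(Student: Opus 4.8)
The plan is to mirror the strategy used for Theorems~\ref{thm:non-centering}--\ref{thm:regression}: write the Gaussian posterior in the form $\exp\!\big(-\tfrac12(\beta_1^\top,\beta_2^\top)\,\Lambda\,(\beta_1^\top,\beta_2^\top)^\top+\text{linear}\big)$, read off the block precision matrix $\Lambda$, and show that the orthogonal change of variables $\theta_i=A_i\beta_i$ (the statement's ``$A_1\beta_i$'' should read $A_i\beta_i$) turns $\Lambda$ into a matrix that is block-diagonal across the three groups $\theta_A=(\theta_1^{(1:r)},\theta_2^{(1:r)})$, $\theta_B=\theta_1^{((r+1):p_1)}$ and $\theta_C=\theta_2^{((r+1):p_2)}$. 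First I would record that the diagonal blocks of $\Lambda$ are $\tau_e X_iX_i^\top+\tau_i\mathbb{I}_{p_i}$ while the single off-diagonal block is $\tau_e X_1X_2^\top$, so the entire coupling between $\beta_1$ and $\beta_2$ is carried by the cross matrix $X_1X_2^\top$. The equivalence of the two chains is then immediate: since each block is transformed by its own orthogonal matrix, sampling $[\beta_1\mid\beta_2]$ is the same as sampling $[\theta_1\mid\theta_2]$ (push the bijection through the conditional Gaussian), and orthogonality of $A_i$ leaves the prior terms $\tfrac{\tau_i}{2}\|\beta_i\|^2=\tfrac{\tau_i}{2}\|\theta_i\|^2$ invariant.

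The algebraic heart of the argument is the transformed cross term $A_1X_1X_2^\top A_2^\top=A_1B_1^\top Q\,B_2A_2^\top$. Using that $A_i$ completes $B_i$ --- its first $r$ rows coincide with the orthonormal rows of $B_i$ and its remaining rows are orthogonal to them --- I would show $A_1B_1^\top=\left(\begin{smallmatrix}\mathbb{I}_r\\0\end{smallmatrix}\right)$ and $B_2A_2^\top=\left(\begin{smallmatrix}\mathbb{I}_r&0\end{smallmatrix}\right)$, so the cross term collapses to $\operatorname{diag}(Q,0)$: it couples only $\theta_1^{(1:r)}$ to $\theta_2^{(1:r)}$, and through the diagonal $Q$ at that. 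At this stage the off-diagonal block already respects the partition, but the diagonal blocks $\tau_e A_iX_iX_i^\top A_i^\top+\tau_i\mathbb{I}_{p_i}$ could still couple $\theta_i^{(1:r)}$ to $\theta_i^{((r+1):p_i)}$. This is exactly what condition~\eqref{special condition on X} rules out: orthogonality of the first $r$ columns of $X_i^\top A_i^\top$ to the last $p_i-r$ columns is precisely the statement that $A_iX_iX_i^\top A_i^\top$ is block-diagonal with an $r\times r$ and a $(p_i-r)\times(p_i-r)$ block. Combining the three computations, $\Lambda$ becomes block-diagonal over $\{\theta_A,\theta_B,\theta_C\}$, and the posterior factorizes as $p(\theta_A\mid y)\,p(\theta_B\mid y)\,p(\theta_C\mid y)$.

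Finally I would translate this block structure into the claimed independence of the three Markov chains, following the conditional-identity bookkeeping of the earlier proofs. A zero entry in the precision of a Gaussian is a conditional-independence statement, so in the $[\theta_1\mid\theta_2]$ half-sweep the draw of $\theta_1^{((r+1):p_1)}$ is an exact draw from its marginal, independent of everything else, while $\theta_1^{(1:r)}$ is updated using $\theta_2^{(1:r)}$ alone; symmetric statements hold for the $[\theta_2\mid\theta_1]$ half-sweep. Consequently $\{\theta_A(s)\}$ runs its own two-block Gibbs sampler on $p(\theta_A\mid y)$, and $\{\theta_B(s)\}$, $\{\theta_C(s)\}$ are i.i.d.\ sequences from their marginals, all driven by independent randomness, so the three processes are mutually independent. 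The one point that needs genuine care --- and the main obstacle --- is this last step: one must verify that the two-block sweep respects the three-way partition even though group $A$ straddles both $\beta_1$ and $\beta_2$, i.e.\ that no information leaks between groups during a conditional update. This is secured by reading off the full conditional-independence pattern from the block-diagonal structure of $\Lambda$, rather than from factorization of the stationary target alone.
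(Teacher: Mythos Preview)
Your proposal is correct and follows essentially the same route as the paper: write the Gaussian posterior, rotate by the orthogonal matrices $A_i$, observe that the cross block $A_1X_1X_2^\top A_2^\top$ reduces to $\operatorname{diag}(Q,0)$ while condition~\eqref{special condition on X} forces each $A_iX_iX_i^\top A_i^\top$ to be block-diagonal, then read off the factorization $p(\theta_A\mid y)\,p(\theta_B\mid y)\,p(\theta_C\mid y)$ and translate it into the four conditional-independence identities that drive the Gibbs sweep. Your write-up is in fact a bit more explicit than the paper's (the computation $A_iB_i^\top=\bigl(\begin{smallmatrix}\mathbb{I}_r\\0\end{smallmatrix}\bigr)$ is spelled out, and you correctly flag the typo $A_1\beta_i\to A_i\beta_i$), but the argument is the same.
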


\begin{proof}
We start by writing out the joint posterior
\begin{align}\label{eq:post_LM}
    p(\beta\mid y,X)&\propto\exp\left[-\tau_e\beta_1^\top X_1 X_2^\top\beta_2-\frac{1}{2}\sum_{i=1}^{2}\beta_i^\top \left(\tau_e X_i X_i^\top+\tau_i\mathbb{I}_{p_i}\right)\beta_i\right]\\
    &=\exp\left[-\tau_e\left(\theta_1^{(1:r)}\right)^\top  Q\theta_2^{(1:r)}-\frac{1}{2}\sum_{i=1}^{2}\theta_i^\top \left(\tau_e A_i^\top X_i X_i^\top A_i+\tau_i\mathbb{I}_{p_i}\right)\theta_i\right]\\
    &=p\left(\theta_{1}^{(1:r)},\theta_{2}^{(1:r)}\mid y,X\right)\prod_{i=1}^2 p\left(\theta_i^{((r+1):p_i)}\mid y,X\right),
\end{align}
where the last equality follows from the condition \eqref{special condition on X}. Based on these identities,
\begin{align*}
    p\left(\theta_1^{((r+1):p_1)}(s+1)\mid y,X,\theta_2(s)\right)&=p\left(\theta_1^{((r+1):p_1)}(s+1)\mid y,X\right),\\
    p\left(\theta_2^{((r+1):p_2)}(s+1)\mid y,X,\theta_1(s+1)\right)&=p\left(\theta_2^{((r+1):p_2)}(s+1)\mid y,X\right),\\
    p\left(\theta_1^{(1:r)}(s+1)\mid y,X,\theta_2(s)\right)&=p\left(\theta_1^{(1:r)}(s+1)\mid y,X,\theta_2^{(1:r)}(s)\right),\\
    p\left(\theta_2^{(1:r)}(s+1)\mid y,X,\theta_1(s+1)\right)&=p\left(\theta_2^{(1:r)}(s+1)\mid y,X,\theta_1^{(1:r)}(s+1)\right),
\end{align*}
the conclusion of the theorem is thus proved.
\end{proof}

One implication of the result is that the multigrid decomposition developed for \eqref{simplest} is non-trivial in the sense that condition \eqref{special condition on X} must be imposed on the covariate matrix. Recall that we have written out the dummy variables $X$ explicitly  for \eqref{simplest}, and thus verified this condition implicitly for the linear model form of \eqref{simplest}.

\vspace{3mm}

\noindent
{\bf Centering for linear models.}
Model \eqref{linear regression model} with its priors can be rewritten as
\begin{equation}
    y=X_2^\top\beta_2+\epsilon,\quad \beta_2\sim\mathcal{N}(M\beta_1,1/\tau_2\mathbb{I}_{p_2}),\quad \beta_1\sim\mathcal{N}(0,1/\tau_1\mathbb{I}_{p_1}),\quad\epsilon\sim\mathcal{N}(0,1/\tau_e\mathbb{I}_{n}),
\end{equation}
to mimic the centering parametrization, where $M\in\mathbb{R}^{p_2\times p_1}$ such that $X_1^\top=X_2^\top M$\footnote{For the simplest model \eqref{simplest}, we actually use $M=\mathbf{1}_I$.}, assuming that $M$ exists.

Now the posterior distribution is
\begin{equation}\label{eq:LM_centering_posterior}
    p(\beta\mid y,X)\propto\exp\left[ \tau_2\beta_1^\top M^\top \beta_2-\frac{1}{2}\beta_2^\top \left(\tau_eX_2 X_2^\top+\tau_2\mathbb{I}_{p_2}\right)\beta_2-\frac{1}{2}\beta_1^\top \left(\tau_2M^\top M+\tau_1\mathbb{I}_{p_1}\right)\beta_1 \right].
\end{equation}
Let the SVD of $M$ be
\begin{equation}
    M=B_1^\top QB_2,
\end{equation}
where $Q\in\mathbb{R}^r,r=\text{rank}(M)$. Again we denote the complement of $B_i$ as $A_i$. Then we require the following condition
\begin{equation}\label{eq:centering_condition}
    [X_2^\top A_2^\top ]_{1:n,k_1}\perp[X_2^\top A_2^\top ]_{1:n,k_2},\quad\forall k_1\leq r<k_2.
\end{equation}
to validate a similar multigrid decomposition. Again, this condition automatically holds for the two-level hierarchical model, but do not hold in general. 

\subsection{Thoughts and speculations}
In both the  non-centering and centering formulations, 
conditions \eqref{special condition on X} and \eqref{eq:centering_condition} most likely do not hold for an arbitrary design matrix $X$. Thus, a multigrid decomposition similar to that of Z\&R seems difficult to come by.
Some natural questions arise: Does a useful multigrid decomposition exist for a general linear regression model in some other ways? If so, what would be a correct construction?
If not, how can we gain more insights on the Gibbs sampler for a general Bayesian regression model \eqref{linear regression model}? 
Can we find a good matrix $M$ so that the convergence rate of the Gibbs sampler corresponding to \eqref{eq:LM_centering_posterior} is faster than that based on \eqref{eq:post_LM}? What if the Gibbs sampler has more than two components?

Besides the  Gaussian prior we have studied here, many other prior distributions have been proposed to accommodate  both sparsity and biases in coefficient estimations, including   \textit{spike-and-slab} priors \citep{mitchell1988bayesian}, \textit{horseshoe} priors \citep{carvalho2010horseshoe}, \textit{neuronized} priors \citep{shin2021neuronized}, and so on. Can one   extend Z\&R's and our results  to accommodate other priors that are more appropriate for high-dimensional  problems? The Gaussian spike-and-slab prior  may be a most likely solvable case?


\section{Partial Centering for Improving Convergence}\label{sec:partial-centering}

\subsection{Partial-centering for  two-level models}
Partial centering provides a continuous trade-off between centering and non-centering. 
With these parametrizations (e.g., centering, non-centering, partial centering) sharing almost the same mathematical formulation, can we derive the most efficient algorithm by optimizing over various parametrizations including not only parametrizations covered by Z\&R, but also those dictating partial centering?

Inspired by an example  in \cite{liu1999parameter} to demonstrate the power of \textit{parameter expansion},  \cite{papaspiliopoulos2003non} proposed the following \textit{partial centering parametrization} in  by introducing a constant $0\leq A\leq1$:

\vspace{2mm}
\noindent
{\bf Model S2} (Symmetric two-level model with \textbf{partial centering} parametrization). \textit{Suppose
\begin{equation}\label{partial centering 2}
    y_{ij}\sim\mathcal{N}((1-A)\mu+a_i,\sigma^2_e),\quad a_i\sim\mathcal{N}(A\mu,\sigma^2_a), \ \ i\in[1:I], j\in[1:J].
\end{equation}
where $y_{ij},\mu, a_i 
in\mathbb{R}$. Same as before, a flat prior is imposed on $\mu$.}
\vspace{2mm}

A similar standard Gibbs sampler as GS(0) and GS(1) can be easily implemented. With $A=0$, \eqref{partial centering 2} reduces to non-centering parametrization; whereas with $A=1$, \eqref{partial centering 2} reduces to centering parametrization. For a general $A$, \cite{papaspiliopoulos2003non} also offered the convergence rate of the standard Gibbs sampler as
\begin{equation}
\rho_A=\frac{\left(A\sigma_a^{-2}-(1-A)J\sigma_e^{-2}\right)^2}{\left(\sigma_a^{-2}+J\sigma_e^{-2}\right)\left(A^2\sigma_a^{-2}+(1-A)^2\sigma_e^{-2}\right)}.
\end{equation}
One surprising fact is that $\rho_{A^\ast}=0$ for $A^\ast=J\sigma_e^{-2}/(\sigma_a^{-2}+J\sigma_e^{-2})$, implying that we achieve exact sampling in one step via this optimal partial centering parameterization. Note that this $A^\ast$ also results in the fact that $\mu$ and $\bar{a}$ are independent {\it a posteriori}.

\subsection{Convergence rates for three-level models}
It is of great interest to extend this flexible parametrization scheme to other models. We here provide an illustration via a slightly more complex model. 

\vspace{2mm}
\noindent
{\bf Model S3} (Symmetric three-level model with \textbf{partial centering} parametrization). \textit{With constants $A,B,C\in\mathbb{R}$, suppose
\begin{equation}\label{partial centering 3}
    y_{ijk}\sim\mathcal{N}((1-A-C)\mu+(1-B)a_i+b_{ij},\sigma^2_e),\quad b_{ij}\sim\mathcal{N}(Ba_i+C\mu,\sigma_b^2),\quad a_i\sim\mathcal{N}(A\mu,\sigma^2_a).
\end{equation}
where $y_{ij},\mu,a_i,\epsilon_{ij}\in\mathbb{R}$ and $i,j,k$ range from $1$ to $I,J,K$ respectively. Same as before, a flat prior is imposed on $\mu$.}

\vspace{2mm}
\noindent
{\bf Sampler GS($A,B,C$)} \textit{Initialize $\mu(0)$, $\ba(0)$, $\bb(0)$ and then iterate\\
1. Sample $\mu(s+1)$ from $p(\mu\mid \ba(s),\bb(s),\by)$;\\
2. Sample $a_i(s+1)$ from $p(a_i\mid \mu(s+1),\bb(s),\by)$ for all $i$;\\
3. Sample $b_{ij}(s+1)$ from $p(b_{ij}\mid\mu(s+1),\ba(s+1),\by)$ for all $i,j$.
}
\vspace{2mm}

If we select $(A,B,C)$ from $\{0,1\}^2\times\{0\}$, \eqref{partial centering 3} reduces to the four parametrizations considered in Sections 2 and 3 of Z\&R, respectively. Defining hierarchical models as trees,  Section 7 of Z\&R develop an abstract theory to deal with various parametrizations including the partial ones here, but they do not  provide more insights for cases $(A,B,C)\notin\{0,1\}^2\times\{0\}$. Let  $\tau_a=I\sigma_a^{-2},\tau_b=IJ\sigma_b^{-2},\tau_e=IJK\sigma_e^{-2}$ be the rescaled precisions. We have the following result.

\begin{Theorem}
If $(\tau_b+\tau_e)^2\tau_a+\tau_b\tau_e(\tau_b-\tau_e)\neq0$, the prescribed Gibbs sampler can achieve exact sampling in one step via suitable scalings of $A,B,C$.
\end{Theorem}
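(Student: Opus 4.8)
The plan is to follow the same multigrid route used in Theorems~\ref{thm:non-centering}--\ref{thm:regression}: first collapse the $(1+I+IJ)$-dimensional chain onto a few low-dimensional Gaussian chains on suitable mean statistics, and then read off exactly when every one of those chains attains rate $0$. Write $\bar a = I^{-1}\sum_i a_i$, $\delta a_i = a_i-\bar a$, and decompose the bottom level as $b_{ij} = \bar b + (\bar b_i - \bar b) + (b_{ij} - \bar b_i)$ with $\bar b_i = J^{-1}\sum_j b_{ij}$ and $\bar b = (IJ)^{-1}\sum_{ij} b_{ij}$. Expanding the log-posterior of \eqref{partial centering 3}, the only $a$--$b$ cross term is $\sum_{ij} a_i b_{ij} = IJ\,\bar a\bar b + J\sum_i \delta a_i(\bar b_i - \bar b)$, and $\mu$ enters only through the grand means; hence the posterior precision is block diagonal across $(\mu,\bar a,\bar b)$, the $I$ group blocks $\{(\delta a_i,\bar b_i-\bar b)\}_{i}$, and the finest residuals $\{b_{ij}-\bar b_i\}$. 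As before, GS$(A,B,C)$ then splits into independent Gibbs samplers on these blocks, with $b_{ij}-\bar b_i$ drawn exactly, so one-step exactness of the whole chain reduces to one-step exactness of the $3$-dimensional grand-mean chain together with the $2$-dimensional group chains.

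Next I would characterize one-step exactness. For a Gaussian systematic-scan sampler the conditional-mean recursion is Gauss--Seidel on the precision $\Omega = D + L + U$ (diagonal, strictly-lower, strictly-upper parts in the scan order $\mu \to a \to b$), with iteration matrix $-(D+L)^{-1}U$. This vanishes iff $U=0$, i.e. iff $\Omega$ is diagonal, i.e. iff the blocks are mutually independent \emph{a posteriori} --- the same mechanism (now in three blocks rather than the two blocks of Lemma~\ref{maximal correlation}) behind the two-level identity $\rho_{A^{\ast}}=0 \Leftrightarrow \mu \perp \bar a$. Thus exact sampling in one step holds iff the three off-diagonal precision entries of $(\mu,\bar a,\bar b)$ and the single off-diagonal of the group block all vanish.

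I would then read these couplings off the quadratic form. The key simplification is that the $\bar a$--$\bar b$ coupling and the $\delta a_i$--$(\bar b_i-\bar b)$ coupling both come from the single cross term above and share the common factor $\tau_e - B(\tau_e+\tau_b)$; so the group conditions collapse onto the grand-level $\bar a$--$\bar b$ condition and pin down $B = \tau_e/(\tau_e+\tau_b)$. The remaining $\mu$--$\bar a$ and $\mu$--$\bar b$ couplings supply two further relations among $A,B,C$ weighted by $\tau_a,\tau_b,\tau_e$. This leaves exactly three scalar equations for the three scalings $A,B,C$, which I would solve by back-substitution: use the $\mu$--$\bar b$ relation to express $C$ through $A$, then close the $\mu$--$\bar a$ relation for $A$.

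The crux, and the main obstacle, is the last step: after eliminating $B$ and $C$ the $\mu$--$\bar a$ relation collapses to a single linear equation for $A$, and I expect its coefficient to be a nonzero multiple of $(\tau_b+\tau_e)^2\tau_a+\tau_b\tau_e(\tau_b-\tau_e)$, so that the stated hypothesis is precisely the nondegeneracy condition making this equation uniquely solvable and hence yielding scalings that drive the rate to $0$. The work is therefore purely algebraic: carrying the elimination through cleanly so that exactly this determinant emerges, and verifying that the resulting triple $(A,B,C)$ simultaneously annihilates every off-diagonal coupling --- in particular confirming that the group-level condition is automatically subsumed by the grand-level one, so that three equations really do suffice for the three unknowns.
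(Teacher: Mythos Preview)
Your proposal is correct and follows the same multigrid-then-orthogonalize route as the paper: decompose into grand, group, and residual levels; recognize that one-step exactness for the $3$-dimensional Gaussian on $(\mu,\bar a,\bar b)$ requires the three off-diagonal precision entries (equivalently, the pairwise correlations $r_1,r_2,r_3$ the paper writes out) to vanish; and solve the resulting three equations for $(A,B,C)$, where the final linear coefficient in $A$ is exactly the expression in the hypothesis. The one substantive difference is how the intermediate level is dispatched. The paper invokes Z\&R's Theorem~11 (the rate ordering $\rho(\delta^{(0)}\bbeta)\ge\rho(\delta^{(1)}\bbeta)\ge\rho(\delta^{(2)}\bbeta)=0$), so that forcing the top-level rate to zero automatically forces the lower ones to zero without further inspection; you instead observe directly that the $\delta a_i$--$(\bar b_i-\bar b)$ precision entry carries the same factor $B\tau_b-(1-B)\tau_e$ as the grand-level $\bar a$--$\bar b$ entry and hence vanishes under the same choice of $B$. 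Your argument is therefore more elementary and self-contained (no appeal to Z\&R's abstract tree theorems nor to the maximal-correlation machinery cited from \cite{liu1995covariance} and \cite{roberts1997updating}), while the paper's is shorter because it outsources these structural facts to those references. Both routes land on the same $B^\ast=\tau_e/(\tau_b+\tau_e)$ and the same determinant $(\tau_b+\tau_e)^2\tau_a+\tau_b\tau_e(\tau_b-\tau_e)$ governing the solvability for $A^\ast$ and $C^\ast$.
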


\begin{proof}
First, we  define $\delta\bbeta=\left(\delta^{(0)}\bbeta,\delta^{(1)}\bbeta,\delta^{(2)}\bbeta\right)$ exactly the same as equation (3.1) in Z\&R, where $\delta^{(0)}\bbeta=\left(\mu,\bar{a},\bar{b}\right),\bar{a}=\sum_{i}a_i/I,\bar{b}=\sum_{ij}b_{ij}/IJ$. Apply Theorem 9 in Z\&R to conclude that $\{\delta^{(0)}\bbeta\},\{\delta^{(1)}\bbeta\},\{\delta^{(2)}\bbeta\}$ evolve independently for the prescribed Gibbs sampler. 

Then, applying Theorem 11 of Z\&R, we derive the following ordering
\begin{equation*}
    \rho_{(A,B,C)}=\rho\left(\delta^{(0)}\bbeta\right)\geq\rho\left(\delta^{(1)}\bbeta\right)\geq\rho\left(\delta^{(2)}\bbeta\right)=0.
\end{equation*}
At last, we have to deal with the posterior distribution of $\delta^{(0)}\bbeta$, which is a $3$-dim Gaussian. The evolution of $\{\delta^{(0)}\bbeta(t)\}$ is simply characterized by a systematic scan Gibbs sampler, scanning according to $\mu\rightarrow\bar{a}\rightarrow\bar{b}\rightarrow\mu$. By \cite{liu1995covariance}, to obtain the convergence rate of a systematic scan Gibbs sampler, it suffices to know about pairwise correlations
\begin{align*}
    r_1&=\text{corr}(\mu,\bar{a})=\frac{BC\tau_b+A\tau_a-(1-A-C)(1-B)\tau_e}{\sqrt{C^2\tau_b+A^2\tau_a+(1-A-C)^2\tau_e}\sqrt{B^2\tau_b+\tau_a+(1-B)^2\tau_e}},\\
    r_2&=\text{corr}(\mu,\bar{b})=\frac{C\tau_b-(1-A-C)\tau_e}{\sqrt{C^2\tau_b+A^2\tau_a+(1-A-C)^2\tau_e}\sqrt{\tau_b+\tau_e}},\\
    r_3&=\text{corr}(\bar{a},\bar{b})=\frac{B\tau_b-(1-B)\tau_e}{\sqrt{\tau_b+\tau_e}\sqrt{B^2\tau_b+\tau_a+(1-B)^2\tau_e}}.
\end{align*}
By  \cite{liu1995covariance} and \cite{roberts1997updating}, we find that $\rho_{(A^\ast,B^\ast,C^\ast)}=0$ for
\begin{equation*}
    A^\ast=\frac{\tau_b\tau_e(\tau_b-\tau_e)}{(\tau_b+\tau_e)^2\tau_a+\tau_b\tau_e(\tau_b-\tau_e)}, \ 
    B^\ast=\frac{\tau_e}{\tau_b+\tau_e}, \ C^\ast=\frac{\tau_a\tau_e(\tau_b+\tau_e)}{(\tau_b+\tau_e)^2\tau_a+\tau_b\tau_e(\tau_b-\tau_e)},
\end{equation*}
due to vanishing correlations $r_1=r_2=r_3=0$.
\end{proof}

An analytical formula is available for the convergence rate of the standard Gibbs sampler GS($A,B,C$) even for general $A,B,C$. But this general formula is a little complicated and out of the scope of this article. We believe that this formula may help us understand the experimental phase transitions depicted in Figure 4 of Z\&R, and further enhance our understanding towards different parametrizations. A direct question is whether exact sampling in one step is possible for less symmetric $2,3$-level hierarchical models.

We end this section by raising more questions. Does the partial centering trick generalize to more complex structures with more confounding factors and deeper hierarchies? How do we develop partial centering for vector hierarchical models discussed in Section~\ref{sec:vector-HM} to design a better Gibbs sampler? Can we go beyond Gaussian priors to perform it in other cases, like the Poisson example in section 5 of Z\&R?

\section{Concluding Remarks}

Although Z\&R's multigrid decomposition has little to do with the classical multigrid idea for both numerical PDEs and Monte Carlo simulations, their decomposition provides a key insight to the understanding of the convergence  of  Gibbs sampling for Bayesian hierarchical models. This insight naturally leads to a constructive strategy for designing better Gibbs sampling algorithms via reparametrization for such models. Our article centers on the possibilities of extending this decomposition strategy  to  more complex, yet structured, Bayesian models,
and to include more options (e.g., parameter expansion) for algorithmic optimization.
We specifically analyzed a few concrete examples, one in each direction. Our results are both encouraging and challenge-revealing. On one hand, we have obtained some analytical expressions of the convergence rates of various Gibbs samplers, from which we may derive an optimal parameterization; on the other hand, we find that situations become much more complex and the optimal parameterization may not exist or computable in high-dimensional cases, such as vector hierarchical models and mixed effects models. In summary, we find  that the decomposition framework established by Z\&R is both elegant and practical, and that much future endeavor is warranted for exploring and exploiting their framework.  

\bibliographystyle{apalike}
\bibliography{reference}

\end{document}